\documentclass{sig-alternate-05-2015}
\makeatletter
\def\ps@headings{%
\def\@oddhead{\mbox{}\scriptsize\rightmark \hfil \thepage}%
\def\@evenhead{\scriptsize\thepage \hfil \leftmark\mbox{}}%
\def\@oddfoot{}%
\def\@evenfoot{}}
\makeatother
\pagestyle{headings}

\usepackage{algorithmic}
\usepackage{algorithm}
\usepackage{verbatim}

\usepackage{amsmath, amssymb}
\usepackage{enumerate}

\usepackage{eufrak}

\usepackage{verbatim}
\usepackage{array}
\usepackage{graphicx}
\usepackage{float}%\usepackage[tight,footnotesize]{subfigure}
\usepackage[footnotesize]{subfigure}

\usepackage{balance}

\hyphenation{net-works}

\newtheorem{theorem}{Theorem}

\newtheorem{corollary}{Corollary}

\newfont{\mycrnotice}{ptmr8t at 7pt}
\newfont{\myconfname}{ptmri8t at 7pt}

\begin{document}

\title{Content-Centric Networking at Internet Scale \\ 
through The Integration of  Name Resolution and Routing }

\author{J.J. Garcia-Luna-Aceves$^{1,2}$,  Maziar Mirzazad-Barijough$^2$,
Ehsan Hemmati$^2$  \\
$^1$Palo Alto Research Center, Palo Alto, CA 94304 \\
$^2$Department of Computer Engineering,
 University of California, Santa Cruz, CA 95064\\
 Email:  \{jj, maziar, ehsan\}@soe.ucsc.edu }

\CopyrightYear{2016}
\setcopyright{acmcopyright}
\conferenceinfo{ACM-ICN '16,}{September 26--28, 2016, Kyoto, Japan}
\isbn{978-1-4503-4467-8/16/09}
\acmPrice{\$15.00}
\doi{http://dx.doi.org/10.1145/2984356.2984359}

\maketitle

\begin{abstract}

%The Named Data Networking (NDN) and  Content-Centric Networking (CCNx) architectures advocate the use of Forwarding Information Bases (FIB) listing the next hops to name prefixes, and  Pending Interest Tables (PIT) used to maintain per-Interest forwarding state.  However,  the required PITs and FIBs are much larger than the traditional FIBs used in the IP Internet, and PITs expose routers to Interest flooding attacks.  

We introduce CCN-RAMP (Routing to Anchors Matching Prefixes), a new approach to content-centric networking.  CCN-RAMP offers all the advantages of the Named Data Networking (NDN) and  Content-Centric Networking (CCNx)
but eliminates the need to either use Pending Interest Tables (PIT)  or lookup large Forwarding Information Bases (FIB) listing name prefixes in order to forward Interests. CCN-RAMP uses small forwarding tables listing anonymous sources of Interests and the locations of name prefixes. Such tables are immune to Interest-flooding attacks and are smaller than the FIBs used to list IP address ranges in the Internet. We show that no forwarding loops can occur with CCN-RAMP, and that  Interests flow over the same routes that NDN and CCNx would maintain using large FIBs.  The results of simulation experiments comparing NDN with CCN-RAMP based on ndnSIM show that  CCN-RAMP requires forwarding state that is orders of magnitude smaller than what NDN requires, and attains even better performance.

\end{abstract}

\vspace{-0.1in}
\begin{CCSXML}
<ccs2012>
<concept>
<concept_id>10003033.10003034</concept_id>
<concept_desc>Networks~Network architectures</concept_desc>
<concept_significance>500</concept_significance>
</concept>
<concept>
<concept_id>10003033.10003034.10003035</concept_id>
<concept_desc>Networks~Network design principles</concept_desc>
<concept_significance>500</concept_significance>
</concept>
<concept>
<concept_id>10003033.10003034.10003035.10003037</concept_id>
<concept_desc>Networks~Naming and addressing</concept_desc>
<concept_significance>500</concept_significance>
</concept>
</ccs2012>
\end{CCSXML}

\ccsdesc[500]{Networks~Network architectures}
\ccsdesc[500]{Networks~Network design principles}
\ccsdesc[500]{Networks~Naming and addressing}
\printccsdesc

% \clubpenalty=10000 
% \widowpenalty = 10000
% \category{C.2.1}{Network Architecture}{Network Architecture and Design}

\section{Introduction}

Several Information-Centric Networking (ICN) architectures have been proposed \cite{icn-survey1, 
icn-survey3, 
icn-survey2} to improve on the  performance of the IP Internet by enabling packet forwarding based on the names of content or services required, rather than the addresses where they may be hosted. They attempt to accomplish this 
by means of new ways to integrate name resolution (mapping of names to locations) and routing (establishing paths between locations) functions.

Section \ref{sec-related} summarizes prior work on  forwarding planes for ICN architectures.
Most architectures keep name resolution and routing  independent of each other, but offer major improvements over the current use of the Domain Name System (DNS) to map domain names to IP addresses.
By contrast,  the Content-Centric Networking (CCNx) \cite{ccnx} and the 
Named Data Networking (NDN) \cite{ndn, ndn-paper} architectures {\em merge} name resolution with routing to content and services in order to allow consumers to request content objects (CO) or services by name.
Routers provide this service using three tables. A content store (CS) lists the COs that are cached locally. A Pending Interest Table (PIT)  keeps forwarding state for each Interest (a request for a CO) processed by a router, such that a single copy of an Interest for the same CO is forwarded and responses to Interests can be sent over the reverse paths traversed by the Interests. A forwarding information base (FIB)  listing the next hops to name prefixes is used to forward Interests towards content producers. 

The simplicity of merging name resolution with routing is very attractive. However, it comes  at a big price, in terms of storage requirements  associated with FIBs and PITs \cite{dai-12,  song15, tsi14, var-13, vir-13, wahl13b}
and additional vulnerabilities to DDoS attacks  associated with PITs \cite{DDos1,  wahl13b}. 

% The main contribution of this paper is two-fold. We show that merging name resolution and routing  in a content-centric network does not have  inherent performance  benefits compared to first resolving a name to a location and then routing to that location. What is critical for the performance of a content-centric network is the way in which name resolution is implemented and how efficiently it is integrated  with routing. 
% Based on this insight,  we  introduce a new approach to content-centric networking, {\bf CCN-RAMP} (Routing to Anchors Matching Prefixes), which eliminates the large storage complexity and vulnerabilities of NDN and CCNx associated with  FIBs listing entries for name prefixes and  PITs maintaining per-Interest forwarding state.

Consider a network in which a name-based routing protocol establishes routes to all known name prefixes and  to those routers that announce name prefixes being locally available, which we call anchors of the prefixes.
Section~\ref{sec-correct} shows that  the path traversed by  an Interest 
when  routers maintain FIBs with entries for name prefixes is the same as the path traversed by the Interest  if the first router binds the name prefix that provides the best match for the CO name to
the name of an anchor of that prefix, 
and routers forward the Interest towards that anchor.

Section~\ref{sec-design} describes CCN-RAMP, which is based on two key innovations.   
First,  
CCN-RAMP 
integrates name resolution with routing to name-prefix anchors 
using the name-based content routing protocol running in the control plane (e.g., \cite{dcr, ehsan-15, nlsr}). The approach takes advantage of the  result presented in Section~\ref{sec-correct}, and  the consequence is 
that routers can forward Interests using forwarding tables that are orders of magnitude smaller than the FIBs required in NDN and CCNx, and even smaller than the FIBs needed in the IP Internet.
Second, 
CCN-RAMP extends recent results in \cite{icnc16, ifip2016} to
eliminate forwarding loops, and to replace PITs with
small forwarding tables listing the next hops towards the origins of Interests, without identifying such origins. 
 
Section~\ref{sec-perf} compares the performance  of CCN-RAMP with NDN when 
%the paths traversed by Interests are loop-free, which need not be the case for NDN or CCNx,  and 
either no  caching or on-path  caching is used in a 153-router network. CCN-RAMP  incurs very similar Interest traffic
than NDN to  retrieve content,  requires orders of magnitude fewer entries in forwarding tables than NDN, and needs fewer table look ups to retrieve 
any given CO than NDN.

\section{Related Work }
\label{sec-related}

Previous work attempting to make the forwarding planes of ICN architectures efficient is vast and excellent reviews exist of this prior work \cite{icn-survey1, 
icn-survey3, icn-survey2}. We only highlight the main points regarding name resolution and routing that motivate our design.  In this light, it is important to first note  that a major impediment for  the IP Internet to provide  efficient access to content and services by name is the poor interaction between  name resolution and routing that it currently supports.
A client must be an integral part of name resolution and is required to bind the name of the CO or service to be requested to the location where the content or service is offered. A client first interacts with a local Domain Name System (DNS)  server to obtain the mapping of a domain name to an IP address in order for a request for content or service can be sent to a specific IP address.  Adding to this, the DNS is based on a hierarchical, static caching structure of servers hosting the mappings of domain names to addresses built and maintained independently of routing and requiring servers to be configured on how to contact other servers over the Internet. 

\subsection{ICN Architectures}

TRIAD \cite{triad} was one of the first projects to advocate using names for routing rather than addresses. Since then, many  ICN architectures have been proposed that support name resolution and content routing functionality to enable consumers to ask for content objects (CO) or services by name. These architectures differ on how a CO name is mapped to a producer or source that can provide the requested CO or service, and the way in which paths are established for requests for COs or services and the associated requests.

Some ICN architectures \cite{icn-survey1, icn-survey3, icn-survey2}, including  DONA, PURSUIT, SAIL, COMET, and MobilityFirst,
implement name resolution and routing as independent functions.
In these architectures, name resolution servers (called by different names) are organized  hierarchically,  as multi-level DHTs, or along trees spanning the network \cite{icn-survey3}, and consumers and producers contact such servers to publish and subscribe to content in various ways. Consumers obtain the locations  of publishers from name resolution servers, and  send their content requests to those locations to get the required content or services, and  address-based  routing is used to establish paths between consumers and subscribers or between resolution servers and subscribers or consumers.

A major limitation of  keeping name resolution independent of routing stems from the complexity incurred in keeping name-resolution servers consistent with one another, and allowing consumers and producers to interact with the name-resolution system.  Enabling the updates of name-to-address mapping  is a non-trivial problem using hierarchical structures, spanning trees, or DHT-based organizations of servers. 
Another design consideration in these architectures is that a solution is still needed to preserve the anonymity of the sources of requests for COs or services, which may induce additional complexity in the forwarding plane or require the use of the forwarding mechanisms used in NDN and CCNx.

In contrast to most prior ICN architectures,  NDN and CCNx  {\em merge} name resolution and routing, such that  routers are the facto name resolvers by establishing  routes to name prefixes on a hop-by-hop basis. A major advantage of doing this is that it eliminates the complexity of designing and maintaining a  network of name-resolution servers that replace the DNS.  This merging of functionalities  is supported by: (a) a name-based routing protocol operating in the control plane, which updates the entries in FIBs listing the next hops to known name prefixes, and (b) forwarding of Interests based on the longest prefix match (LPM)  between the CO name  in the Interest and a name prefix listed in the FIBs.

However, as attractive as the simplicity of merging of name resolution with routing in NDN and CCNx is, it comes at a very big price.  Because the name of a CO or service is bound directly to a route on a hop-by-hop basis, each router along the path traversed by an Interest must look up a FIB listing the known name prefixes.  To operate at Internet scale, FIB sizes in NDN 
are acknowledged to eventually reach billions of entries \cite{song15}. This is easy to imagine, given that the number of registered domains in the IP Internet was more than 300 million by the end of 2015. 
This is orders of magnitude larger than the largest FIB size for the IP Internet, which is smaller than 600,000 today. 

To make matters worse, the name prefixes assumed in ICN architectures are  variable length and much more complex than IP addresses. This means that efficient LPM  algorithms developed for the IP Internet cannot be applied directly to NDN and CCNx. Indeed,  it has been noted that NDN and CCNx cannot be deployed at Internet scale without further  advances in technology \cite{peri-11}.

\vspace{-0.05in}
\subsection{Limitations of Using PITs }

The size of a PIT grows linearly with the number of distinct Interests received by a router as consumers pipeline Interests (e.g., to support HD video streams) or request more content, or more consumers request content \cite{dai-12, tsi14, var-13}.   Unfortunately, as the following paragraphs
summarize, 
PITs  do not deliver substantial benefits compared to much simpler Interest-forwarding mechanisms, and can actually be counter-productive. 

We have shown \cite{ali-ifip16, ifip2016} that the percentage of 
aggregated Interests  is minuscule when in-network caching is used, even  Interests exhibit temporal correlation.  We have also shown that per-Interest forwarding state is not needed to  preserve the privacy of consumers issuing the Interests  \cite{icnc16, ifip2016}. 

Supporting  multicast content delivery efficiently in the data plane has been viewed as a major reason to use PITs. However, as we demonstrate in \cite{globe2016}, maintaining per-Interest forwarding state is unnecessary to implement pull-based multicast content dissemination. In a nutshell,  a source-pacing algorithm can be 
used with routers maintaining per-source rather than per-Interest forwarding state. Routers forwarding traffic from a given multicast source 
maintain the most recent {\em multicast-counter value}  ($mv$) for the source.
A router forwards an Interest towards a source only if the $mv$ stated in the Interest is larger than the value it currently stores for the source and discards the Interest otherwise. The net effect is the same as Interest aggregation, but without the large overhead of per-Interest forwarding state \cite{globe2016}.

We  have also shown  \cite{ifip2015,  ancs2015, nof2015} that Interest aggregation combined with the Interest-loop detection mechanisms used in NDN and CCNx can lead to Interests being aggregated while traversing  forwarding loops without such loops being detected. This results in aggregated Interests ``waiting to infinity" for responses that never come. 
In addition, using PITs makes routers vulnerable to  Interest-flooding attacks \cite{DDos1, vir-13, wahl13a, wahl13b} in which malicious users can send malicious Interests aimed at  making the size of PITs explode. 
% The attack is similar to the SYN-flooding attacks that can be mounted on TCP servers, except that the new attacks are far more dangerous.  
Unfortunately,  the countermeasures that have been proposed  for these attacks \cite{afan} simply attempt to reduce the rates at which suspected routers can forward  Interests, and this can be used to mount other types of denial-of-service attacks.

\subsection{Limitations of Using FIBs \\ Listing Name Prefixes }

A number of proposals have been advanced to allow Interest forwarding in NDN based on LPM to keep up with new wire speeds while coping with the required FIB sizes and name-prefix structures.  
% As 100 Gbps Ethernet is becoming standard practice, a 
A big challenge for name-based Interest forwarding is to attain 100 Gbps rates  or higher, given that FIBs listing name prefixes  at Internet scale are much too large to fit into SRAM or TCAM \cite{peri-11}.

Several  proposals for the implementation of FIBs  for NDN rely on  the use of tries and massive parallelism in order to avoid bottlenecks in the encoding process needed to use the tries \cite{wang12, wang13}. Other approaches are based on hash tables for FIB lookups \cite{so12, so13, var-12, yuan-12},  which requires  larger memory footprints and is not scalable to prefix names with a large number of name components. Hash-based approaches for name-based forwarding are based on  DRAM technology and rely on massive parallel processing, because they would require  hundreds of MiBs for just a few million prefixes. 

Given the major limitations of  using  LPM in FIBs containing billions of name prefixes, a few proposals have been advanced to either reduce the size of FIBs 
listing name prefixes or eliminate the use of such FIBs.   
 
Song et al. \cite{song15} introduced the concept of ``speculative forwarding" based on longest-prefix classification (LPC) rather than LPM. LPC behaves just like LPM when 
a match is found in the FIB for the name stated in an Interest; however, with LPC a packet is forwarded to a next hop given by the FIB even if no match is found.  
Unfortunately,  as described in  \cite{song15}, Interests may be forwarded along loops.  Although the NDN forwarding strategy can prevent Interests from traversing the same forwarding loop multiple times, it cannot guarantee that Interests will not be aggregated while they traverse forwarding loops \cite{ancs2015}.

SNAMP \cite{afan15}  and PANINI  \cite{panini16} reduce  FIB sizes by means of default routes and default-free zones. 
Edge routers resolve and  keep track of local name prefixes, and forward all  other interests toward backbone routers
that  create a default-free zone and map name prefixes to globally-routed names. 

TagNet \cite{papa-14} uses content descriptors and host locators for forwarding.
Content descriptors are variable-length sets represented with fixed-length Bloom filters for forwarding. 
Host locators are routing labels assigned to routers and hosts along one or multiple spanning trees. The 
labeling  approach used in TagNet to assign locators is due to  Thorup and Zwick \cite{thoru-01}. TagNet results in FIBs that are much smaller than the FIBs required in NDN; however, it has a number of limitations. Content requests (Interests) 
must state the locators of  their sources, which eliminates  the anonymity  provided in NDN and CCNx. Like any other scheme based on compact-routing, 
the paths traversed over the labeled spanning trees can have some stretch over the shortest paths. The 
Thorup-Zwick labeling used in TagNet is a depth-first search 
approach, and entire spanning trees may have to be relabeled after a single link failure. No prior work exists showing that TagNet or similar routing schemes based on interval-routing labels are suitable for large networks subject to topology changes or mobility of hosts and routers.

% \vspace{-0.02in}
\section{Routing to Name prefixes \\ vs. Routing to Anchors}
\label{sec-correct}

We use the term {\em anchor} to denote a router that, as part of the operation of the name-based routing protocol, announces  the content corresponding to a name prefix being locally available. 
If multiple mirroring sites host the content corresponding to a name prefix, then the routers attached to those sites announce the same name prefix. However,  a  router simply caching COs from a name prefix does not announce the name prefix in the name-based routing protocol.  

An anchor announcement can be done implicitly or explicitly as part of the operation of the name-based routing protocol running in the control plane, and routers simply caching COs are not anchors. For example, in NDN \cite{nlsr} and other name-based routing protocols  based on link-state information \cite{ehsan-15}, routers exchange link-state announcements (LSA)  corresponding to either name prefixes or  adjacencies to networks or routers. In this context, any router that
originates an LSA for a name prefix  is an anchor of the  prefix. On the other hand, DCR  \cite{dcr} and other  name-based content routing protocols  (e.g., \cite{gold}) based on distance information to name prefixes  use the names of anchors to ensure the correctness of the multi-path route computation \cite{icnp14}.

Figure \ref{fig:fibs} shows an example of a content-centric network in which router  $y$ is an anchor  for prefixes $P^*$, $Q^*$, and $R^*$; and router $z$ is an anchor for prefixes $A^*$, $C^*$, and $P^*$. Dark solid arrowheads show the best next hop towards name prefixes. A red arrowhead indicates the best next hop to name prefixes with COs locally available at router $y$, and a dashed arrow head indicates the best next hop towards name prefixes with COs locally available at router $z$.  The FIB entries at router $p$ are shown when FIB entries are maintained for all instances of each name prefix, the nearest instances of a name prefix, or only for the anchors of name prefixes.

 \vspace{-0.18in}
\begin{figure}[h]
\begin{centering}
    \mbox{
    \subfigure{\scalebox{.21}{\includegraphics{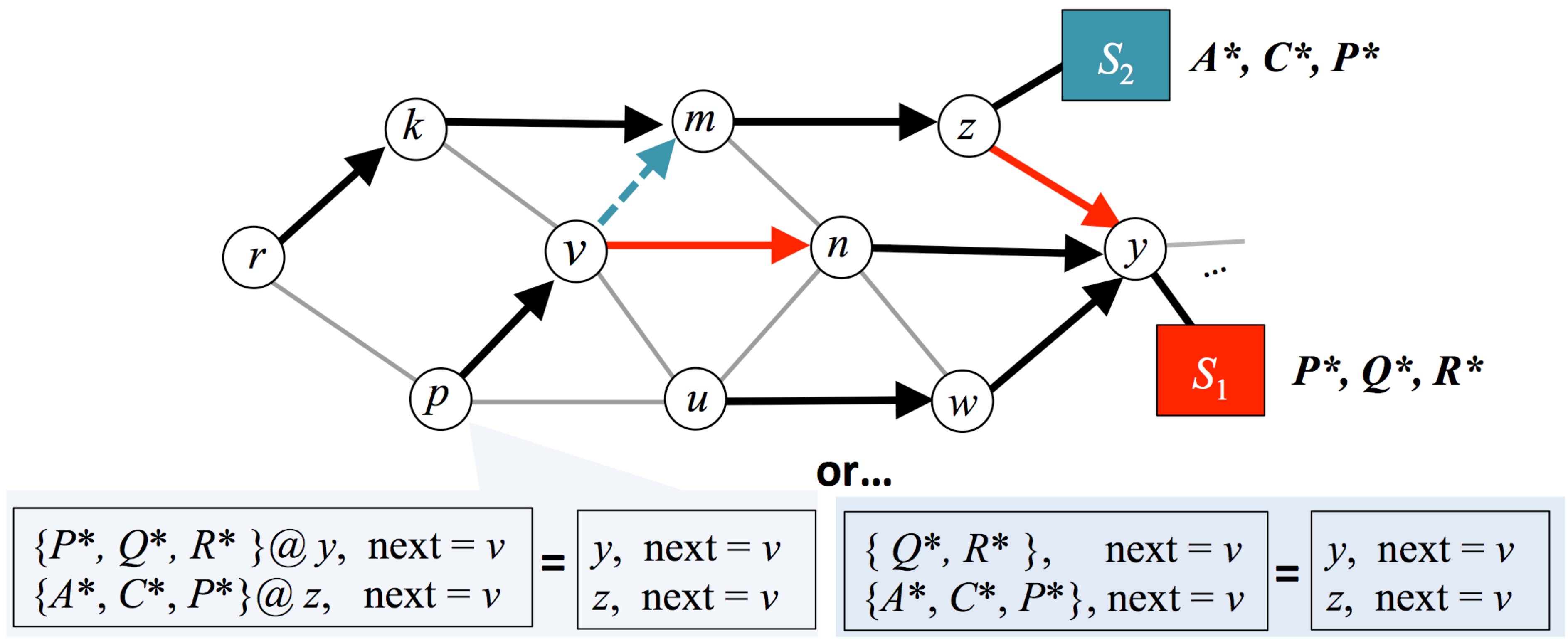}}}
      }
  \vspace{-0.1in}
   \caption{FIB entries for name prefix instances and FIB entries for  anchors of
    name prefix instances
   }
   \label{fig:fibs}
\end{centering}
\end{figure}

This example  illustrates  the fact that routes to instances of name prefixes must also be routes to the anchors announcing those instances. Therefore,
the paths obtained from FIB entries listing name prefixes are the same as the paths obtained from FIB entries listing the anchors of name prefixes.

It is important to note that a router 
acts as the anchor of a name prefix over time scales that are many orders of magnitude larger than  either the time scale at which congestion varies in a network due to traffic or even topology changes, or the time needed for all routers to respond to congestion changes. 

Even though current Internet  routing protocols do not handle congestion or multi-path routing well, 
the necessary  and sufficient conditions for minimum-delay routing (or minimum-congestion routing) are well known \cite{gallager} and practical approaches that provide very good approximations to optimum routing using datagrams  have existed for some time \cite{murthy, vutukury}. 
Given the long  time periods over which routers act as anchors of name prefixes, 
these approaches
can be used in either routing to name prefixes or  routing to the anchors of name prefixes. Furthermore,  existing name-based routing protocols can easily 
apply mechanisms to react to congestion, namely: multi-path routing,  maintaining congestion information about local interfaces, and  path-based measurements of congestion. Accordingly, it  can be safely assumed that congestion-oriented multi-path routing is attained  independently of whether routes are established for name prefixes or anchors.

The following theorem formalizes the result illustrated in Figure 1 for the case of shortest-path routing (single path or multi-path) to name prefixes. We show  
that, if the same  name-based routing protocol is used to establish routes to anchors and to name prefixes,  the paths traversed by Interests are the same or mostly the same.

% \vspace{0.05in}
\begin{theorem}
\label{theo1}
The paths to name prefixes obtained using forwarding entries listing the name prefixes are the same as the paths obtained using forwarding entries listing the anchors of name prefixes in a stable  network in which a correct name-based routing protocol is executed.
\end{theorem}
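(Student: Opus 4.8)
The plan is to reduce the claim to the principle of optimality for shortest paths, using the fact that, for routing purposes, a name prefix is co-located with the anchors that announce it. First I would fix notation: in a stable network let $d(r,v)$ denote the converged shortest-path distance computed by the routing protocol from router $r$ to router $v$, let $w(r,r')$ denote the weight of the link from $r$ to a neighbor $r'$, and for a name prefix $P^*$ let $A(P^*)$ be the set of anchors that announce $P^*$. Since the content named by $P^*$ resides precisely at its anchors, the distance the protocol associates with $P^*$ at router $r$ is $d(r,P^*)=\min_{a\in A(P^*)} d(r,a)$, and the FIB entry for $P^*$ at $r$ points to the next hop(s) on a shortest path to a nearest anchor realizing this minimum. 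I would make explicit that ``stable'' means these distances have converged and satisfy Bellman's equations, and ``correct'' means every router's next-hop selection is consistent with them, so that the induced paths are loop-free shortest paths; this formalizes the observation illustrated in Figure~\ref{fig:fibs} that a route to an instance of a prefix is a route to the anchor announcing it.

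The core of the argument is a single-hop consistency lemma. Suppose $a\in A(P^*)$ is a nearest anchor of $P^*$ to $r$ and $r'$ is a next hop from $r$ on a shortest path to $a$, so that $d(r,a)=w(r,r')+d(r',a)$. I would show that $a$ is also a nearest anchor of $P^*$ to $r'$: if some $a'\in A(P^*)$ had $d(r',a')<d(r',a)$, the triangle inequality would give $d(r,a')\le w(r,r')+d(r',a')<w(r,r')+d(r',a)=d(r,a)$, contradicting the choice of $a$. Hence the next hop that $r'$ takes toward $P^*$ under prefix-based forwarding coincides with its next hop toward the fixed anchor $a$. Iterating along the path and inducting on its length shows that the entire path traced toward $P^*$ using FIB entries for name prefixes is identical to the path traced toward $a$ using FIB entries for anchors, where $a$ is the anchor bound to $P^*$ at the first hop. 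This settles the single-path case.

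For multi-path routing I would lift the lemma to sets: the set of shortest paths to $P^*$ from $r$ is exactly the union, over all anchors in $A(P^*)$ attaining the minimum distance $d(r,P^*)$, of the shortest paths to those anchors. A symmetric triangle-inequality argument shows that if two anchors are equidistant-nearest from $r'$ and $r'$ lies on a shortest path from $r$, then both are equidistant-nearest from $r$ as well; consequently the prefix-based multi-path set reaches no anchor that the anchor-based forwarding excludes once all minimizing anchors are admitted at the first hop. This yields equality of the two path sets.

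The step I expect to be the main obstacle is precisely this multi-path case: guaranteeing that binding the prefix to anchors at the first hop neither drops nor adds shortest paths relative to hop-by-hop prefix resolution. The single-path claim follows almost immediately from the principle of optimality, but in the presence of ties among equidistant anchors one must argue carefully that the union-over-minimizing-anchors construction is invariant along every shortest path; otherwise the two schemes could diverge at a router that sees an additional equidistant anchor. Handling this tie structure, rather than establishing the basic distance identity, is where the real care is required.
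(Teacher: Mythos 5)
Your proof is correct, but it is not the argument the paper gives. The paper's proof is a two-sentence contradiction that works per instance: since the instance of prefix $P$ announced by anchor $a$ is by definition co-located with $a$, any route from a router $r$ to that instance is a route to $a$ itself, so a route to the instance shorter than every converged route to $a$ would contradict the correctness of the routing protocol after convergence at $t_0$. It never writes Bellman equations, never inducts along a path, and never analyzes ties among equidistant anchors; the single-anchor case is instead isolated in Corollary~\ref{coro}, and multi-homed behavior is treated only informally around Equations (1) and (2). Your development --- the identity $d(r,P^*)=\min_{a\in A(P^*)}d(r,a)$, the single-hop consistency lemma showing that a nearest anchor at $r$ remains nearest at the next hop $r'$, the induction along the path, and the set-level argument that the shortest paths to $P^*$ are exactly the union of shortest paths to the minimizing anchors --- is a constructive refinement that makes explicit what the paper leaves implicit. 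In particular, you address the operationally important point that the paper's proof glosses over: in CCN-RAMP the origin router binds the prefix to one anchor once, whereas prefix-based forwarding re-resolves at every hop, and these coincide only because minimizing anchors propagate consistently along shortest paths; with ties, the two schemes agree as path \emph{sets} but a deterministic tie-break could realize different (equal-length) single paths, which matches the paper's own hedge in the surrounding text that the paths are ``the same or mostly the same.'' What the paper's contradiction buys is brevity and independence from metric bookkeeping; what yours buys is rigor in the multi-homed case and a precise statement of exactly where path equality holds (multi-path semantics) versus where it holds only up to tie-breaking.
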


 \begin{proof}
Consider a content-centric network in which forwarding entries list the next hops to the anchors of name prefixes. Assume that the routing protocol  computes the paths to all the known anchors by time $t_0$ and that no changes occur in the network after that time.

Consider a name prefix $P$ for which router $a$ is an anchor, and assume for the sake of contradiction that, at some time $t_1 > t_0$, there is a route from a router $r$ to the instance of prefix $P$ announced by anchor $a$  that is shorter than 
any of the  routes from $r$ to $a$ implied by the forwarding tables maintained by routers at 
$t_1$.   This is a contradiction to the definition of an anchor and the assumption that  the routing protocol  
computes correct routes to all anchors by time $t_0$ and no changes occur after that. \end{proof}

% \vspace{0.05in}
\begin{corollary}
\label{coro}
For any name prefix that has a single anchor, the paths to the  name prefix obtained using forwarding entries listing  name prefixes 
are the same as the paths obtained using forwarding entries listing the anchors of name prefixes.
\end{corollary}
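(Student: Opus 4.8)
The plan is to obtain the Corollary as an immediate specialization of Theorem~\ref{theo1} to the case in which the name prefix has exactly one anchor. First I would observe that if a name prefix $P$ has a single anchor $a$, then there is only one instance of $P$ in the network, namely the one announced by $a$. Consequently the phrase ``the path to the name prefix $P$'' is unambiguous: routing to $P$ can only mean routing to its unique instance, which resides at $a$. This removes the one source of ambiguity that distinguishes routing to prefixes from routing to anchors in the general multi-anchor setting, where a forwarding entry listing a prefix must resolve to the nearest of several mirrored instances while forwarding entries listing anchors keep those instances distinct.

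Next I would invoke the argument already established in the proof of Theorem~\ref{theo1}. That proof shows, for an arbitrary router $r$ and an anchor $a$ of $P$, that in a stable network governed by a correct name-based routing protocol the route from $r$ to the instance of $P$ announced by $a$ coincides with the route from $r$ to $a$ implied by the forwarding tables for anchors; otherwise one could exhibit a route to the instance strictly shorter than any route the protocol computes to $a$, contradicting both the definition of an anchor and the correctness of the protocol. Since in the single-anchor case the prefix instance and the anchor are literally the same destination, the forwarding entry listing $P$ and the forwarding entry listing $a$ must be populated by the same underlying shortest-path computation, and hence induce the same set of next hops at every router.

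Finally I would conclude that, because the two kinds of forwarding entries resolve to identical next-hop sets at each router, the paths traversed by an Interest for a CO matched by $P$ are the same whether routers list the prefix $P$ or its anchor $a$. I do not expect any serious obstacle here: the Corollary is genuinely weaker than Theorem~\ref{theo1} because it excludes the multi-anchor case, and it is precisely that excluded case that carries the only subtlety, the ``nearest instance'' tie-breaking among mirrors. The single step that merits explicit care is confirming that a single-anchor prefix yields exactly one instance, so that there is no hidden choice of target; once that is stated, the equality of paths follows directly from the shortest-path optimality already used in the proof of Theorem~\ref{theo1}.
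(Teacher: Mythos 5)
Your mechanism is largely right, but your reading of the Corollary's scope is not, and that produces a genuine gap. Theorem~\ref{theo1} carries an explicit hypothesis that the Corollary drops: the theorem is stated for \emph{a stable network} in which the routing protocol has converged by time $t_0$ and no changes occur afterwards. The Corollary states no such hypothesis, and the paper's own proof is in two parts precisely for this reason: for a stable topology it cites Theorem~\ref{theo1}, exactly as you do, but it then separately handles the case in which forwarding tables are inconsistent due to network dynamics, arguing that the result still follows from the one-to-one correspondence between a single-anchor prefix and its anchor. Your proposal instead asserts that the Corollary is ``genuinely weaker'' than the theorem and that the only subtlety it excludes is nearest-instance tie-breaking among mirrors. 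That misses what the Corollary is actually designed to capture: single-anchoring buys path equality even during transients, when the theorem's stability assumption fails and multi-homed prefixes can see genuinely divergent paths (this is the point of the discussion around Equations (1) and (2) and Figure~\ref{fig:ex2}). A proof that only specializes Theorem~\ref{theo1} covers the stable case and silently leaves the dynamic case unproved.

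The good news is that your middle paragraph already contains the idea needed to close the gap: since the prefix and its unique anchor are literally the same destination, the prefix entry and the anchor entry at each router are populated by the same routing computation from the same disseminated information, so they induce identical next hops \emph{whatever} the current (possibly inconsistent, possibly looping) state of the tables is. That bijection argument nowhere uses convergence. To repair the proof, state it as a separate second case covering network dynamics, and drop the claim that the Corollary follows by mere specialization of Theorem~\ref{theo1} — it trades the theorem's multi-anchor generality for robustness to dynamics, rather than being strictly weaker.
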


 \begin{proof}
The proof follows from Theorem 1 for the case of a stable topology. If
forwarding tables are inconsistent due to network dynamics, the result follows  from the one-to-one correspondence between a name prefix and its anchor, given that  the prefix is hosted at a single site.
 \end{proof}

Figure \ref{fig:ex2} illustrates a content-centric network in which name prefix $P^*$ is multi-homed. Assume that router $o$ receives an Interest for a CO with a name in prefix $P^*$ from consumer $c$, and the nearest anchor of $P^*$ to router $o$ is $a_i$.  In the example, $D_{ij}$ denotes the length of the pat from router $i$ to router $j$, and $D^*_{r a_i}$ denotes the distance from $r$ to $a_i$ along a path that does not include $p$.

% \vspace{-0.05in}
If routing is based on anchor names, router $o$ binds the CO name to anchor $a_i$. Assume that the shortest path from $o$ to $a_i$ 
%($P_{o a_i}$) 
includes router $r$ and that $p$ is the next hop from $r$ to $a_i$.
Consider the case in which, because   link $(r, p)$ fails or becomes too congested, router $r$ must find an alternative path to forward  
an Interest intended for  anchor $a_i$. Anchor $a_j$ is the closest anchor of prefix $P^*$ to router $r$ after the change in link $(r, p)$. Routing based on name prefixes can provide more efficient forwarding than routing based on anchor names only if one of the following  conditions is true:

% \vspace{-0.12in}
{\small
\begin{equation}
D^*_{r a_i} < \infty ~ \wedge ~ 
D_{r a_j} < D^*_{r a_i} 
\end{equation}
\begin{equation}
D^*_{r a_i} = \infty ~ \wedge ~ 
D_{r a_j}  < D_{r o} + D_{o a_j}
\end{equation}
}

 \vspace{-0.05in}
Only a small number of Interests pertaining to name prefixes that are multi-homed may be forwarded more efficiently if routing based on name prefixes is used.  The reason for this is that Equations 1 and 2 cannot be satisfied over extended time periods. The name-based routing protocol operating in the control plane must make router $o$ update 
its distances to anchors reflecting the change in  link $(r, p)$.  Once router $o$
updates its forwarding table, it must select $a_i$  or another anchor as the new nearest anchor for $P^*$ and Interests would
traverse  new shortest paths. We observe that this is the case even if mobility of hosting sites or consumers occurs, because the efficiency of Interest forwarding is determined by the distances between the routers attached to consumers and the routers attached to hosting sites (anchors).

 \vspace{-0.10in}
\begin{figure}[h]
\begin{centering}
    \mbox{
    \subfigure{\scalebox{.18}{\includegraphics{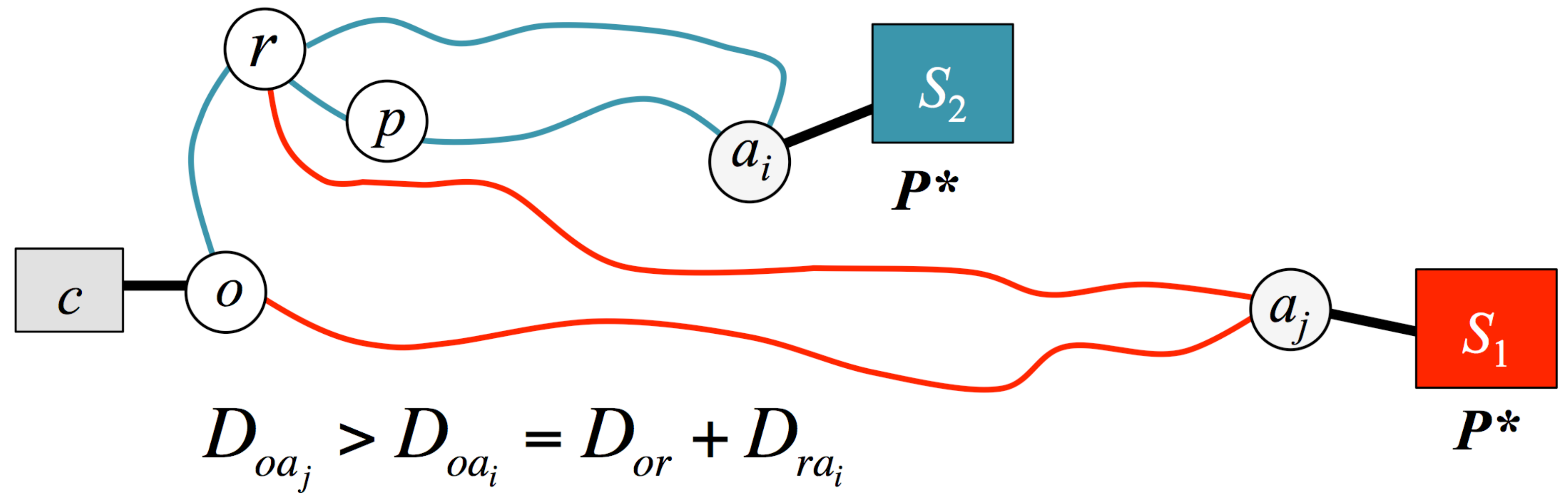}}}
      }
  \vspace{-0.1in}
   \caption{Forwarding Interests based on routes to anchors or routes to name prefixes
   }
   \label{fig:ex2}
\end{centering}
\end{figure}

\section{CCN-RAMP }
\label{sec-design}

\subsection{Design Overview}

CCN-RAMP allows routers to forward Interests by looking up small  forwarding tables listing next hops to anchors. 
Clearly, for this  to work, routers need to first 
obtain the mapping of  the CO name stated in an Interest to an anchor 
of the name prefix that best matches the CO name. 

CCN-RAMP takes advantage of the fact that 
existing name-based routing protocols communicate the anchor of a name prefix as part of an LSA  or a distance update for the prefix \cite{dcr, ehsan-15, nlsr}.  
Using the information disseminated in the name-based routing protocol, a router builds and maintains two tables:  A Forwarding to Anchors Base (FAB) listing the routes to anchors, and a Prefix Resolution Table (PRT)  listing the anchors of each name prefix. 

A router receiving an Interest from a local consumer  (call it origin router) uses its PRT to bind the CO name to the nearest  anchor for the name prefix that is the best match for the CO name. To allow relaying routers to use only their FABs to forward Interests,  an Interest states the name of the anchor chosen by the origin router. The origin router and other relaying routers  forward the Interest as needed using their FABs and the anchor name in the Interest.  

Routers in CCN-RAMP use exactly the same amount of routing signaling  as routers in NDN and CCNx, and  a PRT in CCN-GRAM has as many entries as a FIB in NDN and CCNx. However, the amount of  routing signaling is not a problem, because anchors of name prefixes  change infrequently and the name-based routing protocol can send updates regarding anchor-prefix bindings independently of updates regarding how to reach anchors.
The limitation of using a FIB listing name prefixes is not its size but the need to look up the FIB in real time for each Interest being forwarded, which either requires  very expensive memory or renders very slow lookup times. 

In NDN and CCNx, each router forwarding an Interest must look up a FIB based on name prefixes  because name resolution and routing are merged into one operation. By contrast,  name resolution in CCN-RAMP is delegated to the origin routers that receive Interests from local consumers. 
Each router is in effect a name resolver. 
An origin router that binds a CO name to an anchor by looking up its  PRT   in response to an Interest from a local consumer replaces the role of the DNS to resolve a name into an address. However, consumers are relieved from being involved in name resolution. 
What can take hundreds of milliseconds using the DNS takes just a lookup of the PRT, which can be implemented using tries and slow storage. Once the binding of a CO name to an anchor is done by an origin router, 
forwarding an Interest involves fast lookups of FABs that can be even smaller than the FIBs used in the IP Internet today because only anchors are listed. 

The tradeoff made in  CCN-RAMP compared to prior approaches that separate name resolution from address-based routing is the need for each router to store a PRT. This is acceptable, given the cost of memory today and the infrequency with which PRT entries must be updated.

The design of CCN-RAMP eliminates forwarding loops by ordering the routers  forwarding  Interests based on their distances to destinations  \cite{ancs2015}.
To attain this, each Interest carries the distance to an anchor and FABs list the next hops {\em and} the  distances to anchors.

CCN-RAMP extends our prior work  \cite{icnc16, ifip2016}  to eliminate the use of PITs while  providing the same degree of Interest anonymity enabled in NDN and CCNx. 
A router maintains a Label Swapping with Anchors Table (LSAT) 
to remember the reverse paths traversed by Interests, and uses anonymous identifiers (AID) with local scope to denote the origins of Interests. 
The origin of an Interest is denoted with an AID that is swapped at each hop.

\subsection{Assumptions}

We make a few assumptions to simplify our  description of CCN-RAMP; however, they  should not be considered  design requirements. For convenience, a request for  content  from a local user is sent to its local router in the form of an Interest. 

Interests are retransmitted only by the consumers that originated them, rather than routers that relay Interests,  routers forward Interest based on LPM, and a router  can determine whether or not the CO with the exact same CO name is stored locally. Routers know which interfaces are neighbor routers and which are local consumers, and forward  Interests on a best-effort basis.  

Allowing anchors with only subsets of the COs in the name prefixes they announce 
requires routers to determine which of the anchors announcing the name prefix actually host the requested CO.
This can be accomplished in CCN-RAMP using  the
multi-instantiated destination spanning trees (MIDST) described in \cite{dcr, icnp14}. 
The pros and cons of allowing anchors to host only subsets of the COs in name prefixes, and the search mechanisms needed to support it,  are the subject of another publication. In the rest of this paper, we assume that each anchor of a name prefix is required to
have   {\em all} the COs in the name prefix locally available.

The name of content object (CO)  $j$ is denoted by  $n(j)$
and the name prefix corresponding to the longest prefix match  for  name $n(j)$ is denoted by $n(j)^*$.
The set of neighbors of router $i$ is denoted by $ N^i$.

\subsection{Information Exchanged }
\label{sec-info}

An Interest forwarded  by router $k$ requesting CO with name $n(j)$ is denoted by  $I[n(j), AID^I(k), 
a^I(k), D^I(k) ]$, and states the name  of the CO ($n(j)$),  a fixed-length 
anonymous identifier ($AID^I(k)$) denoting the 
origin router of the Interest, the anchor selected by the first router processing the Interest 
($a^I(k)$), and the distance ($D^I(k)$) from $k$ to  $a^I(k)$.
 
A data packet sent by router $i$ in response to  an Interest
is denoted by  $DP[n(j), AID^R(i), sp(j) ]$ and  states, in addition to a CO, the name  of the  CO being sent ($n(j)$),  an anonymous identifier ($AID^R(i)$) denoting 
the router that should receive the data packet, and a  security payload ($sp(j)$) used optionally to validate the CO.

An error message sent by router $i$ in response to  an Interest is denoted by $ERR[n(j),$   $ AID^R(i)$, $a^R(i)$,  $ \mathsf{CODE} ]$  and states the name  of a CO ($n(j)$),   an anonymous identifier ($AID^R(i)$) that states the intended recipient of the reply, 
the selected anchor for the name prefix ($a^R(i)$), and a code ($\mathsf{CODE}$) indicating the reason why the reply is sent. Possible reasons for sending a reply include: an Interest loop is detected, no route is found towards requested content, no content is found, and an upstream link is broken.

\subsection{Information Stored}
\label{sec-info2}

Router $i$ maintains three tables for packet forwarding:  A
Prefix Resolution Table ($PRT^i$), a Forwarding to Anchors Base ($FAB^i$),  and a Label Swapping with Anchors Table ($LSAT^i$).
If router $i$ has local consumers, it  maintains 
a Local Request Table ($LRT^i$). Router $i$ maintains  a Content Store ($CS^i$) if it provides content caching locally.

$PRT^i$ is indexed by the known name prefixes advertised by their anchors. Each entry of the $PRT^i$ states the names of the selected anchors that advertised the prefix. Depending on the specific approach, the list may state the nearest anchors or all the anchors of the name prefix. However, with the assumption that anchors must have all COs of the name prefixes they announce, listing the nearest anchors for a name prefix suffices.

$FAB^i$ is indexed by anchor names and each entry in $FAB^i$ states available next hops to the anchor. The distance stored for  
neighbor $q$ for anchor $a$  in $FAB^i$  is denoted by $D(i, a, q)$. This information is updated by means of a name-based routing protocol running in the control plane.

$LSAT^i$ is  indexed by anonymous identifiers denoting origin routers. An anonymous identifier (AID) is simply a fixed-length number.
Each  entry in $LSAT^i$ states an 
$AID$ locally created or received in Interests from a previous hop,  
the previous hop  ($PH^i[AID]$) that provided the AID, a next hop ($NH^i[AID]$), 
the mapped AID  ($MAP^i [AID]$) that should be used for the next hop, and  the distance 
($D^i[AID]$) to the anchor that should receive the forwarded Interests.

$LRT^i$ lists the names of the COs requested by router $i$ on behalf of local consumers. The entry for CO name $n(j)$  states the name of the CO ($n(j)$) and a list of the identifiers of local consumers (denoted by $lc[n(j)]$) that have requested the CO.
$CS^i$ lists the COs cached locally.  The entry for CO name $n(j)$  states a pointer to  the content of the CO (denoted by $p[n(j)]$).

\subsection{Avoiding Forwarding Loops}
\label{lfr}

Let $S^i_a$ denote the set of next-hop neighbors of router $i$ for anchor $a$. 
Router $i$ uses the following  rule to 
ensure that Interests cannot traverse forwarding loops, even if the 
forwarding data maintained by routers regarding name prefixes and anchors is inconsistent or contains  routing-table loops.

 \vspace{0.05in}
\noindent
{\bf  Anchor-Based Loop-Free Forwarding  (ALF):}  \\
If router $i$ receives $I[n(j), AID^I(k), a^I(k) = a, D^I(k) ]$ from router $k$,
it can forward  $I[n(j), AID^I(i), a^I(i) = a, D^I(i) ]$  if:

\vspace{-0.07in}
\begin{enumerate}
\item
$AID^I(k) \not\in LSAT^i \wedge ~ \exists v \in  S^i_{a} ~(~ D^I(k)  > D(i, a, v) ~) $
\vspace{-0.05in}
\item
$AID^I(k) \in LSAT^i  ~\wedge~   (~D^I(k)  > D^i[AID^I(k)] ~) $
\end{enumerate}

% \vspace{0.05in}
\begin{theorem}
\label{theo}
No Interest  can traverse a forwarding loop in a content-centric network in which  CCN-RAMP is used.
\end{theorem}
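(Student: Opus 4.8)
The plan is to treat the distance to the selected anchor as a strictly decreasing potential along the forwarding trajectory of any single Interest. The entire purpose of the ALF rule is to impose a total order on the routers that relay a given Interest toward an anchor $a$, so I would first isolate that order as a one-line per-hop invariant and then derive a contradiction by closing a cycle with it.

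\textbf{Key step (strict decrease at every hop).} I would first show that whenever a router $i$ forwards an Interest it received from a router $k$ carrying distance $D^I(k)$ to anchor $a$, the distance $D^I(i)$ that $i$ stamps on the forwarded Interest satisfies $D^I(i) < D^I(k)$. This is checked directly against the two clauses of ALF. If $AID^I(k)\notin LSAT^i$ (clause 1), then $i$ forwards only through some next hop $v\in S^i_a$ with $D^I(k) > D(i,a,v)$ and stamps $D^I(i)=D(i,a,v)$, so $D^I(i) < D^I(k)$. If $AID^I(k)\in LSAT^i$ (clause 2), then $i$ forwards only when $D^I(k) > D^i[AID^I(k)]$ and stamps the stored value $D^I(i)=D^i[AID^I(k)]$, so again $D^I(i) < D^I(k)$. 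The important feature is that this inequality comes purely from the local comparison in ALF between the distance carried by the Interest and the distance $i$ stores in its $FAB^i$ or $LSAT^i$; it therefore holds even when those stored distances are mutually inconsistent or contain routing-table loops, which is exactly the robustness the statement asks for.

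\textbf{Closing the loop.} Suppose, for contradiction, that some Interest traverses a forwarding loop, i.e., there is a cyclic sequence of routers $r_1 \to r_2 \to \cdots \to r_h \to r_1$ in which each router forwards this Interest to the next. Applying the key step to each of the $h$ hops, including the closing hop $r_h \to r_1$, yields
\[
D^I(r_1) > D^I(r_2) > \cdots > D^I(r_h) > D^I(r_1),
\]
so $D^I(r_1) > D^I(r_1)$, a contradiction. Hence no such loop can exist, independently of the requested name prefix, the anchor selected, or the consistency of the routing state. Since there are finitely many routers and the carried distance strictly decreases at every hop, the same invariant also shows that the relaying of any Interest terminates after finitely many hops.

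The step I expect to be most delicate is the \emph{bookkeeping around the AID swapping}, not the inequality itself. Because the AID is rewritten at every hop, a circulating Interest that returns to a previously visited router will in general arrive with a fresh AID, so that router may process it under clause 1 rather than clause 2. I must therefore phrase the per-hop conclusion solely in terms of the carried distance $D^I(\cdot)$ and the locally stamped distance, in a way that is agnostic to which clause fires and to whether the router has "seen" the Interest before. Once the per-hop inequality is stated in that clause-independent form, both cases collapse into the single invariant above and the cyclic contradiction goes through verbatim.
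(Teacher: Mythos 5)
Your proposal is correct and takes essentially the same route as the paper: both treat the distance to the selected anchor as a strictly decreasing quantity enforced by ALF at every hop and derive the contradiction $D^I(v) > D^I(v)$ by chaining the inequalities around the assumed cycle. Your clause-independent per-hop invariant is simply a cleaner packaging of the paper's case analysis (the paper collapses both ALF clauses into its Eqs.~(3) and~(4) using the fact that the stamped distance equals the stored LSAT distance), and your handling of the AID-swapping subtlety matches the paper's observation that each first-time forwarding creates the corresponding LSAT entry.
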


 \begin{proof}
Consider a network in which CCN-RAMP is used and  assume that, following the operation of CCN-RAMP, there is a router $v_0$ that originates an Interest for CO $n(j)$, uses longest match prefix to obtain the name prefix $n(j)^*$, and binds that name prefix to anchor $a$. The Interest sent by $v_0$ is
$I[n(j), AID^I(v_0), a^I(v_0) = a, D^I(v_0) ]$.

For the sake of contradiction,  assume that  routers in a forwarding  loop  $L$ of $h$ hops  $\{ v_1 , $ $v_2 , ..., $ $v_h , v_1 \}$  forward the Interest   for CO $n(j)$ originated by $v_0$ along $L$, with no router in $L$  detecting that  the Interest  has traversed  loop $L$. 

Given that  $L$ exists by assumption, router $v_k \in L$ must forward  $I[n(j),  AID^I(v_k), a^I(v_k) = a, D^I(v_k)]$ to router $v_{k+1}$ $ \in L$ for $1 \leq k \leq h - 1$, and router $v_h \in L$ must forward $I[n(j),$   $ AID^I(v_{h}), $ $a^I(v_h) =$ $ a, D^I(v_{h}) ]$ to router $v_{1} \in L$. 
According to ALF, if router $v_k$ forwards Interest $I[n(j), $   $AID^I(v_{k}), a, D^I(v_{k}) ]$ to router $v_{k+1}$ as a result of receiving  $I[n(j),$  $AID^I(v_{k - 1}),$ $ a, D^I(v_{k - 1}) ]$ from router $v_{k-1}$, then  it must be true that  

\vspace{-0.14in}
{\small
\[ 
AID^I(v_{k - 1}) \not\in LSAT^{v_k} 
\wedge [~D^I(v_{k-1})  > D(v_k, a, v_{k+1}) ]
\]
}

\vspace{-0.2in}
\noindent
or

\vspace{-0.2in}
{\small
\[
 AID^I(v_{k - 1}) \in LSAT^{v_k} 
\wedge [~D^I(v_{k-1})  > D^{v_k} [AID^I(v_{k - 1})~] . 
\]
}

 \vspace{-0.12in}
Similarly, if router $v_1$ 
forwards Interest $I[n(j), AID^I(v_{1}), $ $a, D^I(v_{1}) ]$ to router $v_{2}$ as a result of receiving  $I[n(j),$  $AID^I(v_{h}),$ $ a, $ $D^I(v_{h}) ]$ from router $v_{h}$, then 

%\vspace{-0.14in}
{\small
\[ 
AID^I(v_{h}) \not\in LSAT^{v_1} \wedge [~D^I(v_{h})  > D(v_1, a, v_{2}) ]
\]
}

\vspace{-0.2in}
 \noindent
 or

\vspace{-0.2in}
{\small
\[
%\mbox{or} ~~~
AID^I(v_{h}) \in LSAT^{v_1} \wedge  [~D^I(v_{h})  > D^{v_1}  [AID^I(v_{h}) ]  ~].
\]

}

\vspace{-0.03in}
Given that each  router in loop $L$ that forwards  an Interest for a given AID for the first time must create an entry in its LSAT, 
it follows from the above argument that, for loop $L$ to exist and be undetected when each router in  the loop uses ALF to forward the Interest  originated by router $v_0$, it must be true that  
\begin{equation}
D^I(v_{k-1})  > D^{v_k} [AID^I(v_{k - 1})]  ~for~ 1 < k \leq h
\end{equation}

\vspace{-0.2in}
\begin{equation}
D^I(v_{h})  > D^{v_1} [AID^I(v_{h})] .
\end{equation}

However, Eqs. (3) and (4) constitute a contradiction, because they imply that  $D^I(v_{k}) > D^I(v_{k})$ for $1 \leq k \leq h $. Therefore, the theorem is true. 
\end{proof}

Theorem 2 is  independent of whether the network is static or dynamic, the specific caching strategy used in the network, the retransmission strategy used by content consumers  or relay routers after experiencing  a timeout or receiving a reply, or whether routers use multiple paths or a single path to forward Interests towards a given anchor.
We should also point out that ALF is a sufficient condition to ensure loop-free Interest forwarding, and it is possible that more flexible loop-free forwarding rules
could be found. This is the subject of future work.

% \vspace{-0.1in}
\subsection{ Interest Forwarding }
 \label{sec-aid} 

Figure \ref{fig:fw} illustrates the forwarding of  Interests  in CCN-RAMP  showing two anchors ($y$ and $z$) and five name prefixes, one of them ($P^*$) is multi-homed at $y$ and $z$. The figure shows the forwarding tables used at router $k$ (PRT, FAB and LSAT). Not shown are the content store and the LRT at router $k$.

 \vspace{-0.10in}
\begin{figure}[h]
\begin{centering}
    \mbox{
    \subfigure{\scalebox{.19}{\includegraphics{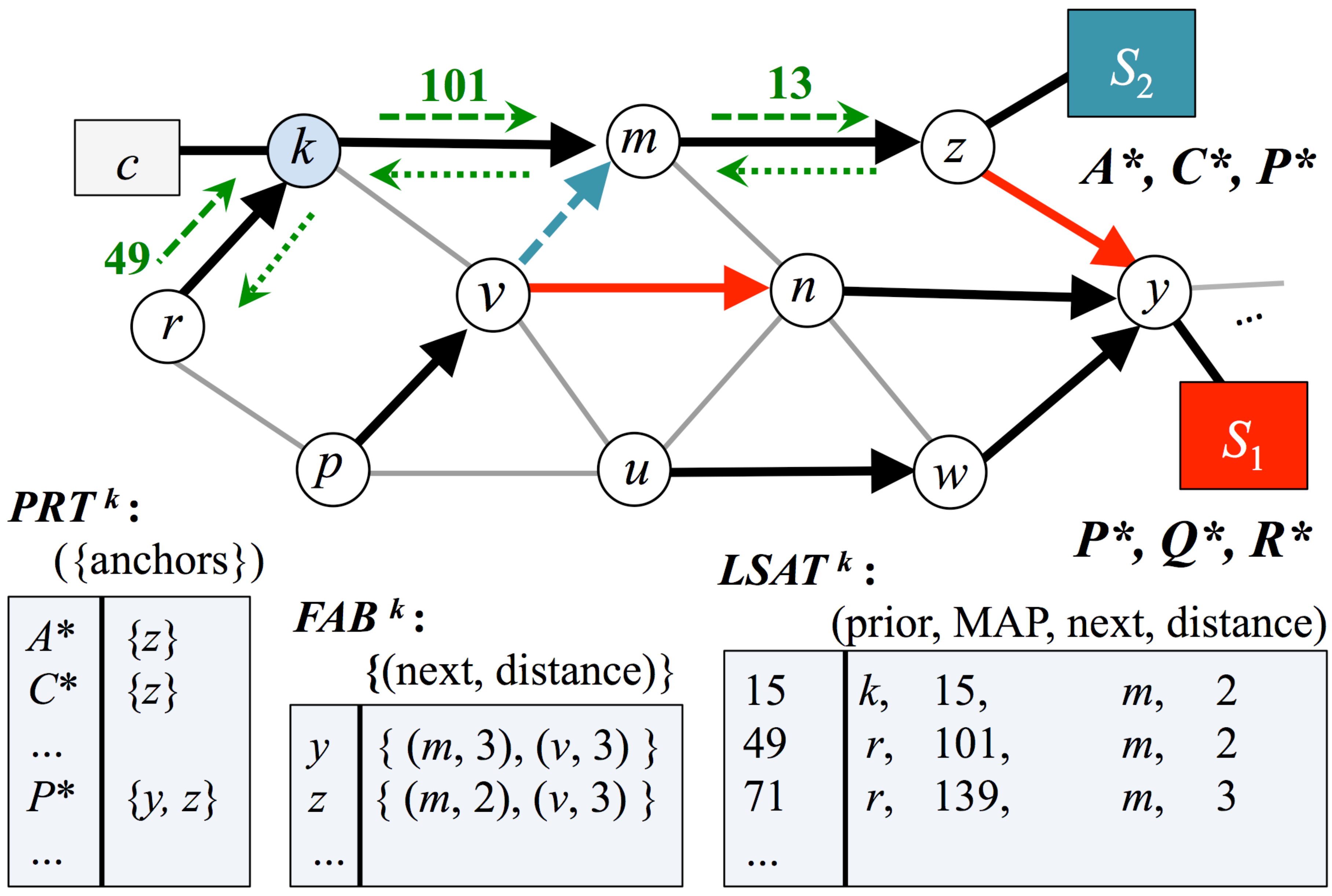}}}
      }
  \vspace{-0.1in}
   \caption{Interest forwarding in CCN-RAMP
   }
   \label{fig:fw}
\end{centering}
\end{figure} 

 \vspace{-0.05in}
When router $k$ receives an Interest from a local consumer $c$ for a CO with name $n(j)$, it looks up its content store ($CS^i$) to determine if the CO is stored locally.  If the CO is remote, router $k$ adds $c$ to an entry in $LRT^k$ stating the list of consumers that have requested $n(j)$ and proceeds to create Interest 
$I[n(j), $ $AID^I(k), $ $a^I(k),$ $ D^I(k)]$. The LRT entries allow router $k$ to demultiplex responses it receives 
for AIDs that it originates   and send the responses  to the correct consumers.

Router $k$ looks up  $PRT^k$ for the name prefix that provides the best match and selects an anchor of that prefix to be included in the Interest ($a^I(k)$). The router then looks up $FAB^k$ to obtain the 
next hop $m$ and the distance to the selected anchor. 
Router $k$ includes its distance to the anchor ($D^I(k)$) in its Interest, so that  forwarding routers can apply ALF as described in Section 4.5. 
If no entry exists in $LSAT^k$ with 
Router $k$  as the prior hop, it selects 
an anonymous identifier ($AID^I(k)$) to denote itself as the origin of the Interest, such that the identifier is  not being used by $k$ to denote any other origin of Interests in $LSAT^k$.  To select new AIDs, router $i$  maintains a hash table or an array of bits that keeps track of previously used random numbers. An alternative approach could be using a counter that is increased after creating a new 
mapped AID (MAP).

In the example of Figure \ref{fig:fw}, router $k$ has forwarded an Interest from a local consumer and used 15 as the  anonymous identifier (AID) to identify itself. Router $k$ can use the same AID in all Interests it sends towards any anchor on behalf of local consumers, or use different AIDs.

If router $k$ forwards Interest $I[n(j), $ $AID^I(r), $ $a^I(r),$ $ D^I(r)]$ from neighbor $r$ to neighbor $m$,  ALF is satisfied, 
and no entry for $AID^I(r)$ exists in $LSAT^k$, then 
router $k$  computes an AID   that is not used as the MAP  in any entry in $LSAT^k$. Router $k$ then creates the  entry for $AID^I(r)$ in $LSAT^k$ stating: $PH^k(AID^I(r)) = r$, $NH^k(AID^I(r))=$ $ m$, $MAP^k$ $(AID^I(r))=$ $ n$, and  $D^k(AID^I(r))=$ $ D(k, a^I(r), m)$.

Note that only the ingress router receiving an Interest from a local consumer
(e.g., router $k$ receiving an Interest from $c$ in the example) needs to look up its PRT, which is of the same size as a FIB in NDN and CCNx. 

Forwarding routers  use only their FABs and LSATs.
Furthermore, a router needs to lookup  its $FAB$ only when no forwarding state exists in  its $LSAT$ for the AID given in an Interest received from a neighbor router.
A forwarding router receiving an Interest looks up its LSAT. If forwarding state is already set in its LSAT for the AID stated in an Interest from a given neighbor, the router forwards the Interest without involving its FAB. 

The forwarding state in $LSAT^k$ specifies the next hop and AID to be used to forward an Interest received with a given AID from a previous hop.  In the example,  router $k$ maps $(AID = 49, prior = r)$ in the Interests received from $r$  to $(AID = 101, next = m)$ in the Interests it forwards to $m$ towards anchor $z$.  The dashed green arrows in Figure~\ref{fig:fw} show the flow of Interests from $r$ to anchor $z$ and the flow of responses from $z$ to $r$. 

The AID swapping approach adopted in CCN-RAMP  simplifies the approaches we introduced in  \cite{icnc16, ifip2016}. Its intent is to make  Interest forwarding as simple and fast as label swapping in IP networks, without revealing the identities of the consumers or routers that originate  the Interests.

 \vspace{-0.07in}
\begin{figure}[h]
\begin{centering}
    \mbox{
    \subfigure{\scalebox{.19}{\includegraphics{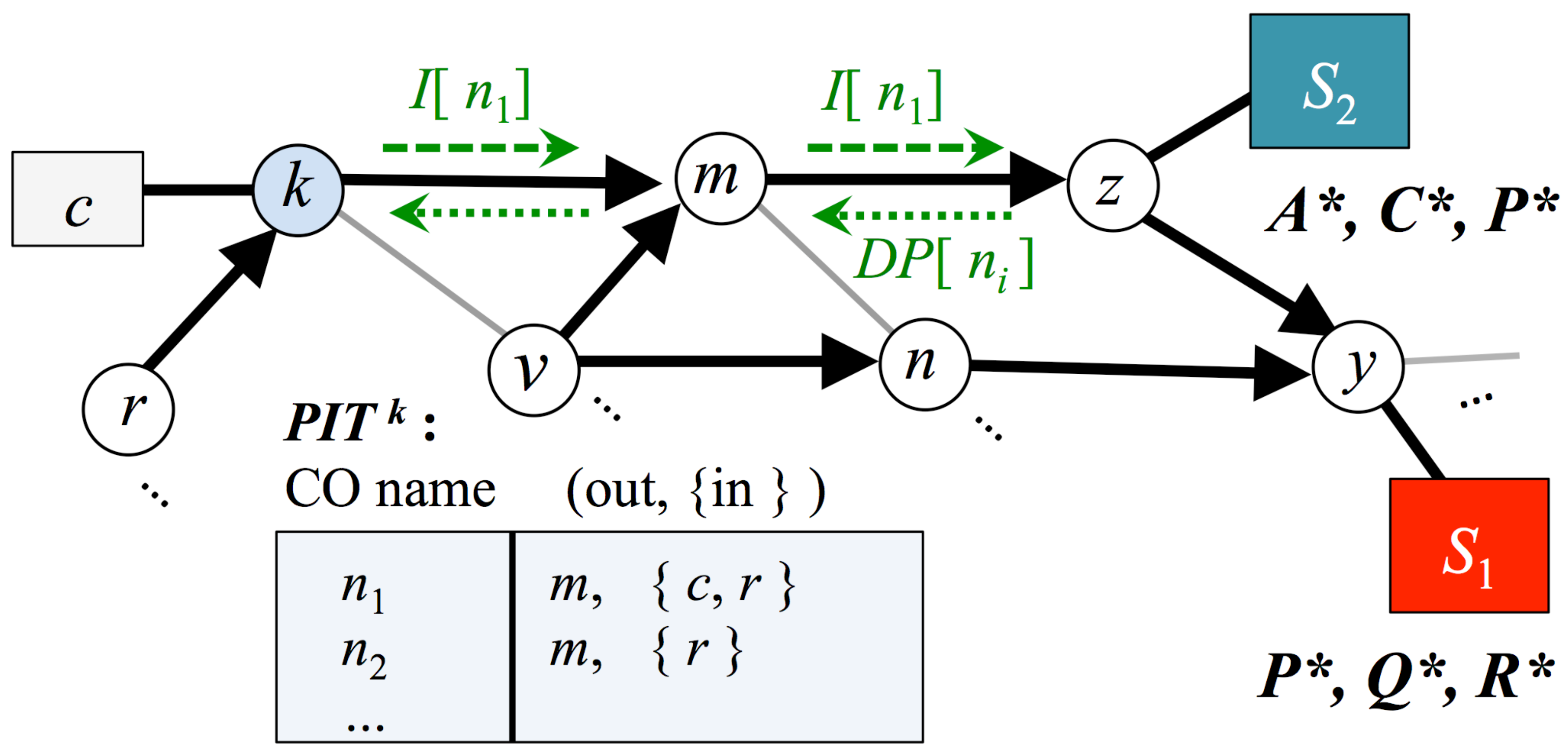}}}
      }
  \vspace{-0.1in}
   \caption{Interest forwarding in NDN 
   }
   \label{fig:pit}
\end{centering}
\end{figure} 

 \vspace{-0.07in}
Figure~\ref{fig:pit} helps to illustrate the level of anonymity provided in NDN and CCNx. Interests in NDN only state the name of the requested CO and a nonce, and the per-Interest forwarding state stored in PITs is what allows routers to forward responses to Interests over the reverse paths traversed by the Interests.
A third party monitoring traffic cannot determine the origin of an Interest
simply from the information in the headers of Interests. However, the 
origin of an Interest can be  obtained if routers collaborate and trace back the path the Interest traversed using the PIT entries listing the CO name in the Interest.

The same type of anonymity is provided in CCN-RAMP, but without the need for per-Interest forwarding state. A third party monitoring traffic cannot determine the source of an Interest simply by reading the information in the header of the Interest, because a local identifier is used to denote the source of an Interest at each hop. However,  routers can collaborate to trace back the origin of Interests  by means of the LSAT entries
stored by the routers.

% \vspace{-0.1in}
\subsection{Updating Forwarding State }
 \label{sec-labeling}

Algorithms 1 to 4 show the steps taken by routers to maintain the forwarding state needed to forward Interests, COs and error messages.  The algorithms assume that $FAB^i$ and $PRT^i$ are initialized and maintained  by a routing protocol operating in the control plane (e.g., NLSR \cite{nlsr}  or DCR \cite{dcr}).

%\vspace{-0.05in}
Algorithm \ref{algo-CCN-RAMP-Interest} shows the steps taken by a router to process an Interest  received from a local consumer or a neighbor router, which were discussed in Section 4.6.
An Interest from a consumer is assumed to specify the name of a requested CO with the rest of the information being nil.

%\vspace{-0.05in}
\begin{algorithm}[h]
\caption{Processing Interest  from $p$ }
\label{algo-CCN-RAMP-Interest}
 {\fontsize{6.5}{6.5}\selectfont
\begin{algorithmic}
\STATE{{\bf function} Interest\_Forwarding}
\STATE {\textbf{INPUT:} $LIST^i$,$FAB^i$, $PRT^i$, $LSAT^i$; \\
\textbf{INPUT:}
$I[n(j),  AID^I(p), anchor, D^I(p)]$;}
\IF{$p[n(j)] \in CS$ }
		\STATE{
		retrieve CO $n(j)$;   
		% $AID^R(i) = AID^I(k)$; 
%		~~~(\% Content is stored locally) \\
		send  $DP[n(j), AID^R(i), sp(j) ]$ to $p$}
\ELSE
	\IF{p is consumer}
			\STATE{$lc[n(j)]  = lc[n(j)]  \cup c;$ }
			\FOR{{\bf each} $a$ {\bf by rank in} $PRT^i(n(j))$} 
				\STATE{
				$anchor=a;$ 
				break;}

			\ENDFOR	
			\STATE{$aid=f(anchor);$}
	\ELSE
		\STATE{$aid=AID^I(p);$}
	\ENDIF

	\STATE {$entry = nil;$}
	\FOR{{\bf each} $e \in LSAT^i_{aid}(aid)$ }
		\IF{$PH(e) = p $ }
			\STATE{$entry=e;$ $break;$ }
		\ENDIF
	\ENDFOR

	\STATE {$ loop=true; noRoute=true;$}
	\IF {$entry = nil$}
		\FOR{{\bf each} $s \in N^i$ {\bf by rank in} $FAB^i(anchor)$} 
			\STATE{NoRoute=false;}
			\IF {$ D^I(p)  > D(i, n(j)^*, s) $   ~(\% ALF is satisfied) }
				\STATE{
			loop=false;
			   $D^I(i) = D(i, n(j)^*, s)$; 
				$NH=s$;\\
						  break;	}
			\ENDIF			
		\ENDFOR	
		\STATE{
		select unused random number   $map$; \\
		%$map = NodeID \oplus map$;\\
			$entry = $create entry $LSAT^i[aid, p, map, NH, D(i, n(j)^*, s)];$
		}
	\ENDIF

\IF{$entry \neq nil$ }
\STATE{
			send $I[n(j), MAP(entry), anchor, D(entry)]$ to $NH(entry)$; }
			{\bf return;}
\ENDIF
	\IF{$noRoute=false \land loop = true$}
		\STATE{
			send $ERR[n(j),  AID^I(p), anchor, \mathsf{loop} ]$ to $p$ ;
		}
	\ELSE
		\STATE{
		 send $ERR[n(j),  AID^I(p), anchor, \mathsf{no~ route} ]$ to $p$ ~\\ (\% No route to $n(j)^*$ exists);
		}
	\ENDIF

\ENDIF

% \STATE{{\bf end function} }
\end{algorithmic}
}
\end{algorithm}

Algorithm \ref{algo-CCN-RAMP-Data} shows the steps taken 
when a  data packet  from router $s$ is received at router $i$. Like an interest, a data packet contains an anonymous identifier $AID^R(s)$. Router looks up  $LSAT^(i)$ for $AID^R(i)$. 
If   no entry with $MAP=AID^R(i)$ exists, the router does not forward the packet any further. If a matching entry is found, the router checks if the previous hop stated in the 
$LSAT^(i)$ entry  is a neighbor router or the router itself. If it is a neighbor node, it forwards the packet to the previous hop $PH$ stated in the matched $entry$. Otherwise, the data packet is forwarded to the local consumers listed in $lc[n(j)]$ of the entry for  $n(j)$ in $LRT^i$.

%\vspace{-0.5in}
Algorithm \ref{algo-link-failure} shows the steps taken when the link connecting router $i$ to router $s$ fails. In such a case, for each entry in $LSAT^i$ that states the next hop as router $s$, an error message including the $AID$ of the $entry$ is created and is sent back to the previous hop stated in the AID entry. This way, router $i$ informs neighbor routers of the link failure. Router $i$ also invalidates all the matching entries in $LSAT^i$.

%\vspace{-0.05in}
Algorithm \ref{algo-CCN-RAMP-ERR} shows the steps followed after
an error message from router $s$ is received at router $i$. The received error message contains an $AID$ set by the neighbor router. Router $i$  looks up $LSAT^i$ and for any entry with $MAP=AID^R(s)$, it  creates an error message containing $AID(entry)$ and sends it to the previous hops or the local consumers. The router also invalidates the  entry found.

\vspace{-0.05in}
\begin{algorithm}[h]
\caption{Processing data packet from router $s$ }
\label{algo-CCN-RAMP-Data}
{\fontsize{7}{7}\selectfont
\begin{algorithmic}
\STATE{{\bf function} Data Packet}
\STATE{\textbf{INPUT:}  $LIST^i$, $LST^i$, 
$DP[n(j),  AID^R(s), sp(j) ]$; }
\STATE{{\bf [o]} verify $ sp(j)$;}
\STATE{{\bf [o]} {\bf if} verification with $ sp(j)$ fails {\bf then} \\
~~~~~discard $DP[n(j), AID^R(s), sp(j) ]$;}

\STATE {
$entry = LSAT^i_{map}(AID^R(s));$~( \%LST in case of CCN-RAMP)
}
\IF {$entry=nil$}
	\STATE{$drop;$ 
	$return;$}
\ENDIF

\IF{ $preHop(entry) = local$~~(\% router $i$ is the origin) }
	\FOR{{\bf each} $c \in lc[n(j)]$}
		\STATE{send $DP[n(j),nil,  sp(j) ]$ to  $c$; $lc[n(j)] = lc[n(j)] - \{ c \}$}
	\ENDFOR
%	\STATE{$lc[n(j)] = \emptyset$};  $p[n(j)] =$ local address of CO }
\ELSE
	\STATE{
	send $DP[n(j), AID(entry), sp(j)]$ to  $preHop(entry)$;
	}
\ENDIF

store CO in CS

\end{algorithmic}}       
\end{algorithm}

\vspace{-0.15in}
\begin{algorithm}[h]
\caption{Failure of link $l$ connected to router i to p}
\label{algo-link-failure}
{\fontsize{7}{7}\selectfont
\begin{algorithmic}
\STATE{{\bf function} Data Packet}
\STATE{\textbf{INPUT:} $LSAT^i$; }
\FOR{{\bf each} $entry \in LSAT^i$ with $NextHop(entry)=p$}
	\STATE{$preHop = getNodeID( AID(entry));$}
	\STATE{send $ERR[nil,  AID(entry), \mathsf{link~ failure} ]$ to $preHop$}
	\STATE{INVALIDATE(entry)};
\ENDFOR
\end{algorithmic}}       
\end{algorithm}

\vspace{-0.15in}
\begin{algorithm}[h]
\caption{Processing Error Message from Router $s$ at router $i$}
\label{algo-CCN-RAMP-ERR}
{\fontsize{7}{7}\selectfont
\begin{algorithmic}
\STATE{{\bf function} ERR}
\STATE{\textbf{INPUT:} $LSAT^i, ERR[n(j),  AID^R(s), reason ]$; }
\FOR{{\bf each} $entry \in LSAT^i_{map}(AID^R(s))$}
	\STATE{$preHop = getNodeID( AID(entry));$}
	\IF{ $preHop = local$~~(\% router $i$ was the origin of the Interest) }
		\FOR{{\bf each} $c \in lc[n(j)]$}
			\STATE{send $DP[n(j),c,  sp(j) ]$ to  $c$; $lc[n(j)] = lc[n(j)] - \{ c \}$}
		\ENDFOR
%	\STATE{$lc[n(j)] = \emptyset$};  $p[n(j)] =$ local address of CO }
	\ELSE
		\STATE{send $ERR[nil,  AID(entry), \mathsf{reason} ]$ to $preHop$;}
	\ENDIF
	\STATE{INVALIDATE(entry);}
\ENDFOR
\end{algorithmic}}       
\end{algorithm}

 \vspace{-0.07in}
\section{Performance Comparison}
\label{sec-perf}

We implemented CCN-RAMP in ndnSIM \cite{ndnsim}  based on Algorithms 1 to 4 and used the NDN implementation in ndnSIM without modifications to compare NDN with CCN-RAMP.   We also compare CCN-RAMP against our previous proposal for the elimination of PITs using datagrams,  CCN-GRAM \cite{ifip2016}, which also relies on name-based routing in the control plane and  uses 
FIBs listing the next hops to known name prefixes to forward Interests.

The performance metrics used for comparison 
are  the average sizes of forwarding tables,  the average number of table lookups needed to obtain one CO,  the average end-to-end delays, and the average number of Interests  sent by routers.  
DCR \cite{dcr} is used in the control plane  to update  FIBs listing name prefixes
in NDN and CCN-GRAM, or the  Prefix Resolution Tables (PRT) and Forwarding to Anchor Bases (FAB) used in CCN-RAMP. Accordingly, we do not  need to consider the signaling overhead of the name-based routing protocol, because it is exactly the same in the three approaches we consider.

We considered networks with no caching and with
on-path caching, with each cache being able to store only 1000 COs.
We used the AT\&T network topology, which is considered to be a realistic topology for simulations  \cite{att}. This topology includes 153  nodes and 184 point-to-point links with 30 ms delay.  To reduce the effects derived from sub-optimal implementations of CCN-RAMP, NDN, or CCN-GRAM, we set the data rate of point-to-point links to 10Gbps. 

We selected 70  nodes randomly to have a consumer application simulating local consumers.
All consumers generate Interests requesting COs from all name prefixes
following  a Zipf distribution with parameter $\alpha=0.7$.
The total number of COs is only  $10^7$, with  1000 COs per name prefix.
We used a relatively small content population, because using larger numbers for COs and name prefixes would just make CCN-RAMP look better compared to NDN.

We selected 
20 of the nodes randomly   to be anchors of 500 different name prefixes each,
and each name prefix has a single anchor. 

Given that prefixes are not multi-homed in our simulation scenarios, the results in Section 3 indicate that the paths traversed by Interests should be the same whether forwarding tables maintain routes to anchors or to prefixes. This is the case  for both single-path and multi-path routing. Accordingly, for simplicity, our simulation experiments assume single-path routing and static topologies. The difference between CCN-RAMP and CCN-GRAM \cite{ifip2016} (its counter-part based on FIBs listing name prefixes) is that the latter incurs more forwarding overhead by its use of FIBs.

\subsection{Average Table Sizes}

Figure \ref{TableSize} shows the average table sizes for NDN, CCN-RAMP, and CCN-GRAM  on a logarithmic scale
as a function of the rate at which Interests arrive at routers with local consumers,
ranging from 100 to 2000 Interests per second.

\vspace{-0.10in}
\begin{figure}[h]
\begin{centering}
    \mbox{
    \subfigure{\scalebox{.6}{\includegraphics{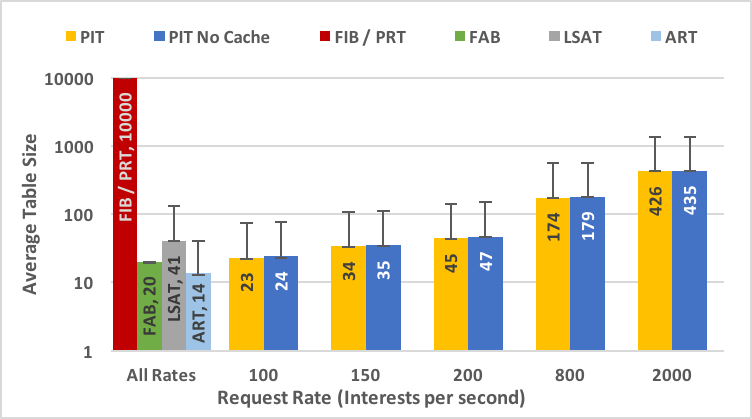}}}
%    \subfigure{\scalebox{.2}{\includegraphics{figures/TableSize.pdf}}}
      }
\vspace{-0.16in}
   \caption{Average number of entries in forwarding tables  for NDN, CCN-GRAM,  and CCN-RAMP }
   \label{TableSize}
\end{centering} 
\end{figure}  

The  number of entries in the PRTs used in CCN-RAMP is the same as the 
number of entries in the FIBs used in NDN and CCN-GRAM, because both list an entry for each known name prefix.
Given that the topology contains 20 producer nodes and each node has 500 different name prefixes, each PRT and FIB table is expected to have  10,000 entries. 

For CCN-RAMP,  the average number of FAB entries for all Interest rates is only 20 and the average size of an 
LSAT, which is used to forward responses to Interests back to consumers, is only 41 entries for all values of Interest rates.
The small sizes of FABs and LSATs should be expected, because there are only 20 routers acting as anchors, and each router acts as a relay of only a fraction of the paths to such anchors. 

The average size of the forwarding table used in CCN-GRAM to send responses to Interests towards consumers (called ART) is only 14 entries for all Interest rates.

By contrast, the number of PIT entries depends on network conditions and traffic load. 
The average PIT size varies from 23 or 24 to 426 or 435, depending on 
whether on-path caching is used.  
The size of PITs at some core routers can be more than 1000 entries 
when the request rate at routers with local consumers is  2000 Interests per second. Interestingly, the average number of  PIT entries is not much smaller when on-path caching is used compared to the case in which no caching is used.

\subsection{Average Number of Table Lookups}

Figure \ref{lookupCnt} shows the average number of table lookups  required to retrieve a single CO, and includes all lookups done in forwarding the Interest for the CO and sending back the corresponding CO to the requesting origin router. 

\vspace{-0.10in}
\begin{figure}[h]
\begin{centering}
    \mbox{
    \subfigure{\scalebox{.63}{\includegraphics{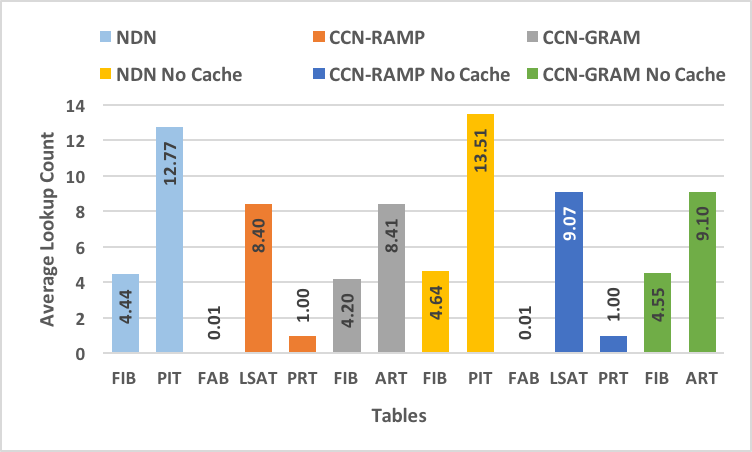}}}
      }
\vspace{-0.1in}
   \caption{Average number of table lookups needed to retrieve one CO}
   \label{lookupCnt}
\end{centering} 
\end{figure}   

In NDN, forwarding an Interest requires  a PIT lookup at each hop along the path from consumer to caching site or anchor. If no PIT entry is found, a FIB lookup is required to obtain the next hop for the Interest. Forwarding a data packet sent from an anchor or a caching site requires a PIT lookup at every hop 
along the way to the consumer. 

In CCN-RAMP, retrieving a remote CO includes one PRT lookup at the router that receives the Interest from a local consumer. That router binds the CO name to an anchor and each router along the path towards the anchor 
must do one FAB lookup to forward the first Interest with an AID that does not exist in the LSAT of the router, and one LSAT lookup for every Interest being forwarded.  Once forwarding state is established along a path from an originating router to an anchor, no FAB lookups are needed for Interests that carry AIDs already listed in the LSATs of the relaying routers. 
As Figure \ref{lookupCnt} shows, the average number of FAB lookups per Interest is a very small fraction, because only a small fraction of Interests are forwarded without having any forwarding state already established in the LSATs of routers.  

In CCN-GRAM, retrieving a remote CO involves one FIB lookup at each hop  along the path from consumer to caching site or anchor, as well as a lookup of the ART in order to carry out the proper swapping of AIDs.  Forwarding a data packet sent from an anchor or a caching site requires only  an ART lookup at each  hop along the way to the consumer. 

Compared to NDN and CCN-GRAM, CCN-RAMP results in relay routers doing  fewer  lookups  of tables that are three orders of magnitude smaller than FIBs listing name prefixes, even for the small scenario we consider.

It can be inferred from  Figure \ref{lookupCnt} that the average hop count for paths traversed by Interests is  around four or five hops, because this is the approximate number of average FIB lookups needed in NDN  and CCN-GRAM when no caching is used.
As should be expected,  the average number of table lookups in NDN, CCN-GRAM, and CCN-RAMP is slightly larger when no on-path caching is available, because the average paths between consumers requesting COs and the anchors are longer than the average paths between consumers and caching sites.

\vspace{-0.10in}
\begin{figure}[h]
\begin{centering}
    \mbox{
    \subfigure{\scalebox{.64}{\includegraphics{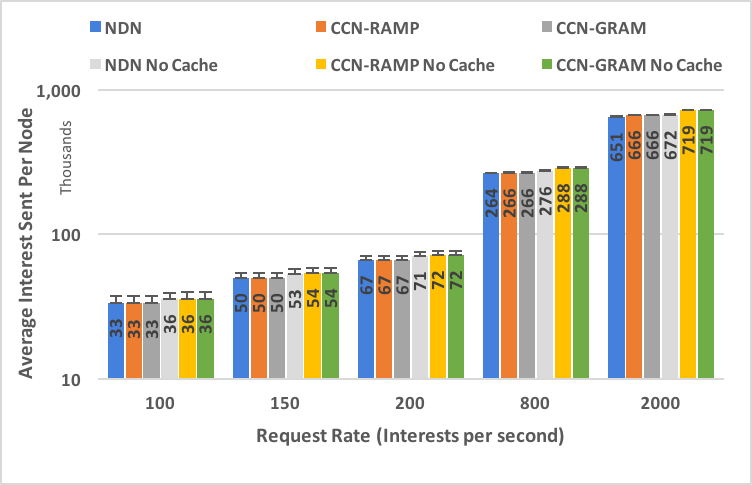}}}
      }
\vspace{-0.1in}
   \caption{Average number of Interests forwarded per router}
   \label{intCnt}
\end{centering} 
\end{figure}  

\vspace{-0.24in}
\begin{figure}[h]
\begin{centering}
    \mbox{
    \subfigure{\scalebox{.62}{\includegraphics{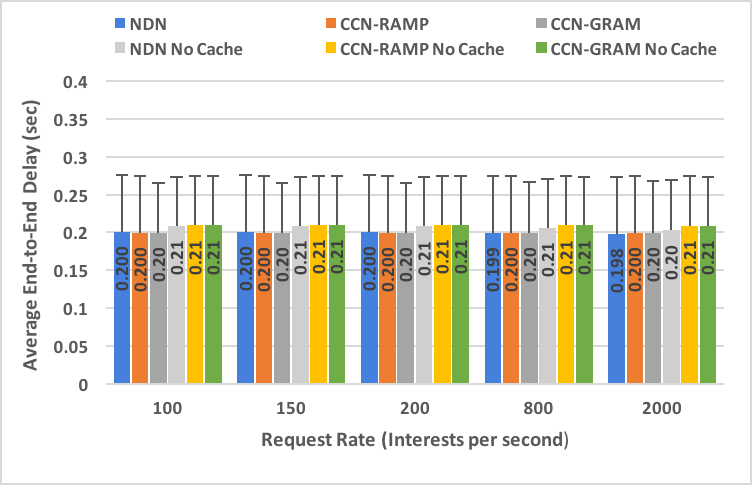}}}
      }
\vspace{-0.1in}
   \caption{Average end-to-end delays }
   \label{delay}
\end{centering} 
\end{figure}

\subsection{Average Number of Interests and \\ End-to-End Delays}

Figure \ref{intCnt} shows the average number of Interests sent by each router. 
As the results indicate,  the average encumber of Interests sent by each router is essentially the same for all three approaches and the percentage of Interests that benefit from aggregation using PITs is insignificant. 

Figure  \ref{delay} shows that  the average end-to-end delays are  very similar in all cases for NDN and CCN-RAMP. 
Given that the simulations assume zero delays for table lookups (i.e., the differences in forwarding table sizes are not taken into account), these simulation results  indicate that the paths traversed by Interests are the same for NDN, CCN-GRAM,  and CCN-RAMP. 
This confirms that  the paths traversed by  Interests
when  routers maintain FIBs with entries for name prefixes tend to be  the same as the paths traversed by Interests  if the origin routers select the anchors of name prefixes and routers forward Interests towards anchors.

%\newpage
% \vspace{-0.1in}
\section{Conclusions}

Scaling has been identified as a major research problem for content-centric networks \cite{forest}.
We  introduced CCN-RAMP, 
a new  approach to  content-centric networking that can be deployed at Internet scale and is able to handle billions of name prefixes, because it eliminates the 
need to lookup large FIBs listing name prefixes, and  the 
use of  PITs that make routers vulnerable to DDoS attacks on the routing infrastructure.
CCN-RAMP provides all the benefits sought by NDN and CCNx, including native support of multicasting without the need for a new multicast routing protocol (see  \cite{ifip2016, globe2016}). In contrast to NDN and CCNx, Interests cannot traverse forwarding loops and no Interest-flooding attacks can be mounted.

The results of simulation experiments based on implementations of NDN, CCN-GRAM,  and CCN-RAMP in ndnSIM  show that CCN-RAMP is more efficient than NDN and CCN-GRAM.   CCN-RAMP rendered similar  end-to-end delays, incurred similar Interest overhead in the data plane, and resulted in  forwarding state with a number of entries  that can be orders of magnitude smaller than the forwarding state required in NDN.

Our results open up several research avenues on content-centric networking at Internet scale. Important next steps include: analyzing the performance impact of the dynamics of  name-based routing protocols adapting to congestion and 
topology changes, analyzing the impact of multihoming of name prefixes, enabling hierarchical name-based routing in the control plane, defining the role of autonomous systems, and analyzing policy-based
routing and forwarding across autonomous systems.

\vspace{-0.05in}
 {\fontsize{7}{7}\selectfont

  }


\begin{thebibliography}{88}\setlength{\itemsep}{.5ex}

 \vspace{0.1in}


\bibitem{ndnsim}
A. Afanasyev et al.,
``ndnSIM: NDN simulator for ns-3'', {\em University of California, Los Angeles, Tech. Rep}, 2012.

\bibitem{afan}
A. Afanasyev et al.,  ``Interest-flooding Attack and Countermeasures in Named Data Networking," {\em Proc. IFIP Networking  `13}, May 2013.

\bibitem{afan15}
A. Afanasyev et al.,  ``SNAMP: Secure Namespace Mapping to Scale NDN Forwarding," {\em in Proc. IEEE Global Internet Symposium `15}, 
2015. 

\bibitem{icn-survey1}
B. Ahlgren et al., ``A Survey of Information-Centric Networking,"  
{\em IEEE Commun. Magazine}, July 2012, pp. 26--36.

% \bibitem{att}
% AT\&T, ``The Quality of Internet Service: AT\&T's Global IP Network Performance Measurements," 2003. \\
% http://ipnetwork.bgtmo.ip.att.net/pws/paper.pdf

% \bibitem{baran}
% P. Baran, ``On Distributed Communications: I. Introduction fo Distributed Communication Networks," Memorandum RM-3420-PR, The RAND Corporation, Aug. 1964.

 \bibitem{icn-survey3}
 M.F. Bari et al.,
 ``A Survey of Naming and Routing in Information-Centric Networks," {\em IEEE Commun. Magazine}, July 2012, pp. 44--53.

\bibitem{ccnx}
Content Centric Networking Project (CCN) [online]. \\
http://www.ccnx.org/releases/latest/doc/technical/

%%%%%
% \bibitem{comet}
% Content Mediator Architecture for Content-aware Networks (COMET) Project [online]. http://www.comet-project.org/


\bibitem{caching}
 A. Dabirmoghaddam et al., ``Understanding Optimal Caching and Opportunistic Caching at The Edge of Information Centric Networks," {\em Proc. ACM ICN `14},  Sept.  2014.

\bibitem{ali-ifip16}
A. Dabirmoghaddam et al., 
``Characterizing Interest Aggregation in Content-Centric Networks," {\em Proc. IFIP Networking 2016}, May 2016.

\bibitem{dai-12}
H. Dai el al., ``On Pending Interest Table in Named Data Networking,"
{\em Proc. ACM ANCS `12}, Oct. 2012.

%%%%%
% \bibitem{conet}
% A. Detti et al., ``CONET: A Content-Centric Inter-networking Architecture," {\em Proc. ACM ICN '12}, 2012.


% \bibitem{diffuse}
% E.W. Dijkstra and C.S. Scholten, ``Termination Detection for Diffusing Computations," {\em Information Processing Letters}, Vol. 11, No. 1, 1980.

% \bibitem{fri12}
% C. Fricker et al., ``Impact of traffic mix on caching performance in a content-centric network," {\em Proc.  IEEE NOMEN Workshop `12}, 2012.



\bibitem{gallager}
R.G. Gallager, ``A Minimum delay Routing Algorithm Using Distributed Information,"
{\em IEEE Trans. Commun.},  1977.

\bibitem{dcr}
J.J. Garcia-Luna-Aceves, ``Name-Based Content Routing in Information Centric Networks Using Distance Information,"
{\em Proc. ACM ICN `14},  Sept.  2014.


\bibitem{icnp14}
J.J. Garcia-Luna-Aceves, ``Routing to Multi-Instantiated Destinations: Principles and Applications,"
{\em Proc. IEEE ICNP `14}, Oct. 2014.

 \bibitem{ifip2015}
J.J. Garcia-Luna-Aceves, ``A Fault-Tolerant Forwarding Strategy for Interest-based Information Centric Networks," {\em Proc. IFIP Networking `15},  May  2015.


\bibitem{ancs2015}
 J.J. Garcia-Luna-Aceves and M.  Mirzazad-Barijough, ``Enabling Correct Interest Forwarding and Retransmissions in a Content Centric Network," {\em Proc. ACM ANCS `15},  May 2015.

\bibitem{nof2015}
J.J. Garcia-Luna-Aceves, ``Eliminating Undetected Interest Looping in Content Centric Networks," {\em Proc. IEEE NOF `15}, Sept. 30-Oct. 2, 2015.

 \bibitem{icnc16}
J.J. Garcia-Luna-Aceves and M.  Mirzazad-Barijough, 
``A Light-Weight Forwarding Plane for Content Centric Networks," {\em Proc. IEEE ICNC `16}, Feb.  2016.


\bibitem{ifip2016}
 J.J. Garcia-Luna-Aceves and M.  Mirzazad-Barijough, ``Content-Centric Networking Using Anonymous Datagrams," {\em Proc. IFIP Networking `16},  May 2016.

\bibitem{globe2016}
 J.J. Garcia-Luna-Aceves and M.  Mirzazad-Barijough, ``Efficient Multicasting in Content-Centric Networks Using Datagrams,"
{\em IEEE Globecom `16}, Washington, D.C., 4--8 Dec. 2016.
 
\bibitem{DDos1}
P. Gasti et al.,  ``DoS and DDoS in Named Data Networking,"
{\em Proc. IEEE ICCCN `13}, 2013. 

\bibitem{forest}
A. Ghodsi et al., ``Information-Centric Networking: Seeing The Forest from The Trees," {\em Proc. ACM Hotnets `11}, 2011.

\bibitem{triad}
M. Gritter and D. Cheriton,
``An Architecture for Content Routing Support in The Internet,"
{\em Proc.  USENIX Symposium on Internet Technologies and Systems}, 
Sept. 2001.

\bibitem{att}
O. Heckmann et al., ``On Realistic Network Topologies for Simulation,"
{\em Proc.  ACM SIGCOMM  MoMeTools `03},  Aug. 2003. 

\bibitem{ehsan-15}
E. Hemmati and J.J. Garcia-Luna-Aceves,
``A New Approach to Name-Based Link-State Routing for Information-Centric Networks," {\em Proc. ACM ICN `15}, Oct. 2015.

%%%%%%
% \bibitem{dona}
% T. Koponen et al., ``A Data Oriented (and Beyond) Network Architecture," {\em Proc. ACM SIGCOMM 07}, 2007. 

\bibitem{nlsr}
V. Lehman et al., ``A Secure Link State Routing Protocol for NDN," 
Technical Report NDN-0037, Jan. 2016.

%%%%
% \bibitem{mob}
% Mobility First project [online]. http://mobilityfirst.winlab.rutgers.edu/

\bibitem{murthy}
S. Murthy 
and J.J. Garcia-Luna-Aceves, 
``Congestion-Oriented Shortest Multipath Routing," {\em Proc. IEEE INFOCOM `96}, 1996.

\bibitem{ndn}
NDN Project [online]. http://www.named-data.net/

\bibitem{papa-14}
M. Papalini et al.,  ``Scalable Routing for Tag-Based Information Centric Networking," {\em Proc. ACM ICN `14}, Sept. 2014.

\bibitem{peri-11}
D. Perino and M. Varvello, 
``A Reality Check for Content Centric Networking," 
{\em ACM ICN `11}, 2011.

%%%%%
% \bibitem{pursuit}
% Publish Subscribe Internet Technology (PURSUIT) Project [online]. 
% http://www.fp7-pursuit.eu/PursuitWeb/

\bibitem{gold}
J. Raju et al.,  ``System and Method for Information Object Routing in Computer Networks,'' U.S. Patent 7,552,233, June 23, 2009

%%%%%
% \bibitem{sail}
% Scalable and Adaptive Internet Solutions (SAIL) Project [online]. http://www.sail-project.eu/

\bibitem{panini16}
T.C. Schmidt et al., ``Let's Collect Names: How PANINI Limits FIB Tables in Name Based Routing," {\em Proc. IFIP Networking `16}, May 2016.

\bibitem{so12}
W. So et al., `` Toward Fast NDN Software Forwarding Lookup Engine Based on Hash Tables," {\em Proc. ACM ANCS `12},  2012.

\bibitem{so13}
W. So et al., ``Named Data Networking on A Router:  Fast and DoS-Resistant Forwarding with Hash Tables," {\em Proc. ACM ANCS `13}, 2013.

\bibitem{song15}
T. Song et al., ``Scalable Name-Based Packet Forwarding: From Millions to Billions,"
{\em Proc. ACM ICN 2015}, Sept. 2015.

\bibitem{thoru-01}
M. Thorup and U. Zwick, ``Compact Routing Schemes," {\em Proc. ACM Symp. Parallel Algorithms and Architectures}, 2001.

\bibitem{tsi14}
C. Tsilopoulos et al., ``Reducing Forwarding State in Content-Centric
Networks with Semi-Stateless Forwarding," {\em Proc. IEEE INFOCOM `14}, April 2014.


\bibitem{var-12}
M. Varvello et al., ``Caesar: A Content Router for High Speed Forwarding," 
{\em Proc. ACM ICN `12}, 2012.

\bibitem{var-13}
M. Varvello et al., ``On The Design and Implementation of a Wire-Speed Pending Interest Table,"
{\em Proc. IEEE Infocom NOMEN Workshop `13}, April 2013.

\bibitem{vir-13}
M. Virgilio et al., ``PIT Overload Analysis in Content Centric Networks,"
{\em Proc. ACM ICN `13}, Aug. 2013.

\bibitem{vutukury}
S. Vutukury and J.J. Garcia-Luna-Aceves, ``A Simple Approximation to Minimum-Delay Routing," {\em Proc. ACM SIGCOMM `99}, Aug. 1999.


\bibitem{wahl13a}
M. Wahlisch et al.,  ``Lessons from the Past: Why Data-driven States Harm Future Information-Centric Networking,"
{\em IFIP Networking  `13}, May 2013.

\bibitem{wahl13b}
M. Wahlisch et al., ``Backscatter from The Data Plane? Threats to Stability and Security in Information-Centric Network Infrastructure,"
{\em Computer Networks}, Vol. 57, No. 16, 
%pp. 3192-3206, 
Nov. 2013.

\bibitem{wang12}
Y. Wang et al., ``Scalable Name Lookup in NDN Using Effective Name Component Encoding,"
{\em Proc. ICDCS `12}, 2012.

\bibitem{wang13}
Y. Wang et al., ``Wire Speed Name Lookup: A GPU-based Approach,"
{\em USENIX NSDI `13}, 2013.

\bibitem{icn-survey2}
G. Xylomenos et al., ``A Survey of Information-centric Networking Research,"  {\em IEEE Communication Surveys and Tutorials}, July 2013.


% \bibitem{ndn-fw2}
% C. Yi et al., ``A Case for Stateful Forwarding Plane," {\em Computer Communications}, pp. 779-791, 2013.

\bibitem{yuan-12}
H. Yuan and P. Crowley, ``Reliably Scalable Name Prefix Lookup,"
{\em proc. IEEE ICCCN `12}, 2012.

\bibitem{ndn-paper}
L. Zhang et al., ``Named Data Networking," {\em ACM SIGCOMM Computer Communication Review}, Vol. 44, No. 3, July 2014.

 
\end{thebibliography}
\end{document}